\documentclass[11pt]{article}
\usepackage[margin=1in]{geometry}
\usepackage{amsmath,amssymb,amsthm}
\usepackage{booktabs,array,longtable}
\usepackage[hidelinks]{hyperref}
\usepackage{xcolor}

\newtheorem{theorem}{Theorem}[section]
\newtheorem{lemma}[theorem]{Lemma}
\newtheorem{proposition}[theorem]{Proposition}
\newtheorem{corollary}[theorem]{Corollary}

\newcommand{\1}{\mathbf 1}

\title{The Deterministic Hare Core Is Nonempty for Nine-Seat Approval Elections}
\author{Jiarui Fang\\[0.35em]
\small Boston University\\
\small \href{mailto:baymin@bu.edu}{\texttt{baymin@bu.edu}}\\
\small ORCID: \href{https://orcid.org/0009-0006-9100-0445}{0009-0006-9100-0445}}
\date{September 2026}

\begin{document}
\maketitle

\begin{abstract}
Core stability gives approval-based committee elections a strong form of
coalitional proportionality, but deterministic existence is not known for an
arbitrary number of seats.  We prove that every finite election with nine
seats has a deterministic Hare-quota core committee.  Starting from an
eight-seat committee that is both a global Proportional Approval Voting (PAV)
maximizer and core stable, we suppose that all one-candidate extensions are
blocked and extract an inclusion-minimal closed response system.  Singleton
responses are eliminated by exact equality rigidity.  For the remaining
responses, rational Farkas certificates bound every nonpositive PAV
add-marginal drift below by $-n/189$ and force every successor drift above
$n/126$.  The uniform response chain is irreducible, so stationarity makes
these bounds incompatible.  The computer-assisted component comprises an
exact pointwise check over 7,356 voter types, 36 one-blocker cells, and a
symmetry-complete family of 19 two-blocker motifs.  Exact certificate archives
and standard-library verifiers support these checks.  The theorem
settles the nine-seat case but does not decide deterministic core existence
for arbitrary committee size.
\end{abstract}

\section{Introduction}

Approval-based committee elections select a fixed number of candidates from
binary approval ballots.  A committee is core stable when no coalition can
use its proportional share of the seats to make every coalition member
strictly better off.  This requirement simultaneously captures individual
utility and group entitlement, and it is therefore stronger than many
candidate-by-candidate proportionality axioms \cite{Peters2025,Cheng2019}.

Whether a deterministic Hare-quota core committee always exists remains a
central unresolved question.  Peters proved existence for at most eight seats
and, independently, for at most fifteen candidates \cite{Peters2025}.
Berker et al. developed an exact blocker-map duality and further special-case
results, while retaining unrestricted existence as open
\cite{Berker2026}.  Becker, Greger, and Peters proved existence for profiles
with at most seven voters and, in the weighted formulation, at most seven
distinct approval types \cite{Becker2026}.  Their affine-monoid and Lindahl
rounding arguments restrict voter types rather than committee size.  Chen and
Hatschka embedded the problem in transferable-utility (TU) and
non-transferable-utility (NTU) voting games \cite{ChenHatschka2026}.  Their
NTU cores for AV, SAV, and PAV represent the same committee-stability object
studied here, whose unrestricted existence they leave open.  Their positive TU
and NTU-CC results use different utility-transfer or scoring assumptions.
Stable lotteries exist much more generally \cite{Cheng2019}, but their
quantifiers do not yield one deterministic committee that defeats every
integral deviation.  Relative to Peters's unrestricted voter-and-candidate
bound, the present theorem extends the verified committee-size range from
eight to nine.  The release includes a dated, bounded source audit and makes no
claim of an exhaustive priority search.

The main obstruction is global rather than local.  If a target blocks an
extension $U\cup\{c\}$ of an eight-seat core $U$, replacing $c$ by a new
outsider from that target escapes the same blocker.  The repaired extension
may, however, acquire another blocker.  An argument about one blocker at a
time cannot exclude a closed cycle of such responses.  We therefore model all
blockers in an inclusion-minimal closed system inside one voter profile.

The proof has three parts.  First, exact equality analysis rules out singleton
responses in a closed system.  Second, one-blocker and simultaneous
two-blocker Farkas certificates convert the remaining blocking constraints
into exact certified PAV drift bounds.  Third, the response sets define an irreducible
Markov chain.  Its stationary distribution telescopes the PAV marginals, but
the certified drift bounds force a strictly positive stationary mean.  This
contradiction rules out every closed blocker system.  All symbolic reductions,
certificate formats, exact verifiers, and the trusted-computing-base boundary
are included in the release package.

\section{Model and theorem}

Let $N$ be a nonempty finite voter set, $C$ a finite candidate set, and
$A_i\subseteq C$ the approval set of voter $i$.  Put $n=|N|$ and
\[
        u_i(X)=|A_i\cap X| \qquad (X\subseteq C).
\]
At committee size $k$, a committee is a $k$-set $W\subseteq C$.  For a
nonempty coalition $S\subseteq N$, a nonempty target $T\subseteq C$ is
\emph{affordable} if
\[
                         n|T|\le k|S|.
\]
The coalition $S$ \emph{blocks} $W$ through $T$ if $T$ is affordable and
$u_i(T)>u_i(W)$ for every $i\in S$.  We say that $T$ blocks $W$ if some
nonempty coalition blocks through it.  The target may overlap $W$, and
equality in the displayed affordability condition is allowed.  For a
nonempty target $T\subseteq C$ define
\[
 q(W,T)=|\{i\in N:u_i(T)>u_i(W)\}|.
\]

\begin{proposition}[integer form of blocking]\label{prop:integer}
A nonempty target $T$ blocks $W$ if and only if
\[
             kq(W,T)\ge n|T|.                              \tag{1}
\]
Consequently $W$ is in the core if and only if
\[
             kq(W,T)<n|T|                                  \tag{2}
\]
for every nonempty $T\subseteq C$ with $|T|\le k$.
\end{proposition}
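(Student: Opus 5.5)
The plan is to prove the first equivalence by choosing the canonical coalition, and then to read off the core criterion as its negation, restricted to small targets.

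For the forward direction of (1), suppose a nonempty coalition $S$ blocks $W$ through $T$. Every member $i\in S$ satisfies $u_i(T)>u_i(W)$, so $S$ is contained in the set
\[
Q(W,T)=\{i\in N:u_i(T)>u_i(W)\}.
\]
Hence $|S|\le q(W,T)$. Affordability gives $n|T|\le k|S|$, and therefore $n|T|\le kq(W,T)$.

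For the converse, assume $kq(W,T)\ge n|T|$ and take $S=Q(W,T)$. Because $T$ is nonempty, $n\ge1$ and $|T|\ge1$, so $n|T|\ge1$. The hypothesis then forces $kq(W,T)\ge1$, hence $q(W,T)\ge1$, and $S$ is nonempty. By construction every $i\in S$ strictly prefers $T$ to $W$, and $k|S|=kq(W,T)\ge n|T|$ is exactly affordability. So $S$ blocks $W$ through $T$. The only delicate point is this nonemptiness check; it is where the assumptions that $T$ and $N$ are nonempty are used.

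For the core characterization (2), $W$ is in the core precisely when no nonempty $T\subseteq C$ blocks it. By (1), this is the same as $kq(W,T)<n|T|$ for every nonempty $T$. It remains to see that targets with $|T|>k$ can be dropped. For such $T$ we have $|T|\ge k+1$, so $n|T|\ge n(k+1)>nk\ge kq(W,T)$, using $q(W,T)\le n$. Thus (2) holds automatically for every $T$ with $|T|>k$, and requiring it only for $1\le|T|\le k$ is equivalent. I expect no real obstacle here; the argument is a direct unpacking of the definitions.
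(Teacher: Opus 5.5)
Your proposal is correct and follows essentially the same route as the paper: the strict-gainer set serves as the canonical coalition in both directions, and $q(W,T)\le n$ makes targets with $|T|>k$ irrelevant, which is the paper's remark that $|S|\le n$ forces $|T|\le k$. One small wording fix: $n\ge1$ comes from $N$ being nonempty, not from $T$ being nonempty, though you do credit both assumptions in your closing remark.
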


\begin{proof}
If a coalition $S$ blocks through $T$, then every member of $S$ is counted
by $q(W,T)$, so $q(W,T)\ge |S|$; affordability gives
$k|S|\ge n|T|$, and (1) follows.  Conversely, if (1) holds, take $S$ to be
the entire strict-gainer set.  It is nonempty because $T$ is nonempty and
$n|T|>0$.  Then $k|S|\ge n|T|$, so $T$ is affordable and every member of
$S$ strictly improves.  Finally $|S|\le n$ implies $|T|\le k$.  This proves
both assertions without rounding or a ceiling convention.
\end{proof}

Our main result is the following exact deterministic existence theorem.

\begin{theorem}[nine-seat nonemptiness]\label{thm:main}
For every finite approval election with $|C|\ge9$, there is a set
$W\subseteq C$, $|W|=9$, satisfying (2) for every nonempty target of size at
most nine.
\end{theorem}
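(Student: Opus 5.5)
We argue by contradiction, following the architecture announced in the introduction. Fix an election with $|C|\ge 9$. By Peters's eight-seat theorem \cite{Peters2025}, applied to the PAV score $\mathrm{PAV}(X)=\sum_i H(u_i(X))$, we take an eight-set $U$ that maximizes PAV among all eight-sets and is also core stable at $k=8$. (If Peters's argument does not give both properties at once, we replace it by a local-search refinement inside the set of PAV maximizers; establishing this is the first lemma.) Suppose Theorem~\ref{thm:main} fails. Then for every outsider $c\in C\setminus U$, Proposition~\ref{prop:integer} gives a target $T_c$ with $|T_c|\le 9$ and $9\,q(U\cup\{c\},T_c)\ge n|T_c|$. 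Because $U$ is core at $k=8$, each such blocker must use the added seat in an essential way. In particular, $T_c$ must contain an outsider $c'\neq c$, since otherwise the same gainers would essentially block $U$ at quota $n/8$. We define the response set $R(c)\subseteq C\setminus U$ to consist of the outsiders $c'$ for which swapping $c$ for $c'$ escapes the blocker. Iterating gives a closed system, and we pass to an inclusion-minimal closed subsystem $\mathcal{R}$ on an outsider set $D$.

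\textbf{Step 1 (singletons).} Suppose some $R(c)=\{c'\}$ is a singleton. We show that $U\cup\{c\}$ and $U\cup\{c'\}$ then satisfy equality in both the PAV-maximality inequality and the blocking inequality (1). Exact integer bookkeeping of the gainer sets should force $T_c$ to be tight, with every gainer increasing its utility by exactly one. This rigidity either lets us merge $c$ and $c'$, contradicting minimality, or yields a blocker of $U$ at $k=8$, contradicting core stability.

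\textbf{Step 2 (drift certificates).} Let $\Delta(c)=\mathrm{PAV}(U\cup\{c\})-\mathrm{PAV}(U)=\sum_{i:c\in A_i}1/(u_i(U)+1)$ be the add-marginal. PAV-optimality of $U$ against single swaps constrains the differences $\Delta(c')-\Delta(c)$. We then group voters into types, according to their utility on $U$ and their approvals among $c$, its responses and the blocker, and encode inequality (1) for $T_c$ and the no-swap conditions as linear constraints on the type masses. Linear programming duality, via rational Farkas certificates, then yields two bounds. First, whenever $\Delta(c)$ is nonpositive relative to the baseline it is at least $-n/189$. Second, every $c'\in R(c)$ has a successor drift exceeding $n/126$. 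The case analysis splits into one-blocker cells and configurations where two blockers act simultaneously, with symmetry used to reduce the latter to finitely many motifs. Each certificate is checked in exact rational arithmetic.

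\textbf{Step 3 (stationarity).} We form the Markov chain on $D$ that moves from $c$ to a uniformly chosen element of $R(c)$. Minimality of $\mathcal{R}$ makes this chain irreducible, so it has a positive stationary distribution $\pi$. Telescoping then gives $\sum_c\pi(c)\,\mathbb{E}[\Delta(c')-\Delta(c)]=0$. On the other hand, the drift bounds from Step 2 force this stationary mean to be at least $n/126-n/189>0$, which is a contradiction.

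The main obstacle is Step 2. One must show that the voter-type enumeration is genuinely exhaustive and that the two-blocker motifs cover every interaction of simultaneous blockers. This is where I expect computer assistance to be unavoidable. I would first try to reduce the problem to the case where each $T_c$ meets $D$ in at most two candidates, and then enumerate over the lattice of types.
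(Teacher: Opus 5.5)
Your architecture is the paper's: Peters's PAV-maximal eight-seat core $U$, a minimal closed blocker system, elimination of singleton responses, one- and two-blocker Farkas drift bounds, and stationary telescoping. However, Step~1 has a genuine gap, because neither exit you propose yields a contradiction.

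The rigidity itself is right, although it comes from PAV maximality (summing the changes of the swaps $U-u+x$ over $u\in U\setminus H$) rather than from gainer bookkeeping. A singleton response $T_c=H\cup\{x\}$ forces $9|Q|=n(h+1)$. Gainers approve $x$, reject $c$ and approve exactly $H$ inside $U$. Nongainers reject $x$ and have their $U$-approvals inside $U\setminus H$. Every swap $U-u+x$ ties $U$.

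Neither exit works from here. A singleton arc $c\to x$ is compatible with minimality: it is just an arc of a strongly connected digraph, and ``merging'' $c$ with $x$ produces no smaller closed system. The rigid configuration is also compatible with $U$ being an eight-seat core. For instance, when $h\le7$, the target $H\cup\{x\}$ has exactly $|Q|=n(h+1)/9<n(h+1)/8$ gainers against $U$.

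The contradiction has to be found at the successor $x$. The missing idea is that the eight ties make $x$ absorbing:
\begin{enumerate}
\item An exact pointwise certificate, using the tie rows together with the PAV rows for $U-u+v$ and $U-\{u,w\}+\{x,v\}$, shows that any blocker of $W_x$ must be a bare outsider singleton $\{v\}$.
\item Two singleton responses cannot occur in succession. The second gainer set $Q'$ rejects $x$ and approves exactly some nonempty $H'\subseteq U\setminus H$. It would therefore make a tied swap $U-u+x$, $u\in H'$, strictly PAV-worse: its loss $n(h'+1)/(9h')$ exceeds the first gainers' gain $n/9$.
\end{enumerate}
Hence $W_x$ is unblocked, contradicting $x\in X$. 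This is what gives $|B_c|\ge2$, on which every drift certificate depends.

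Three further repairs are needed.
\begin{enumerate}
\item The response set must be $B_c=T_c\setminus U$, not the set of all outsiders whose swap escapes $T_c$. Cancellation only gives $B_c\subseteq R(c)$. The larger set can contain candidates outside $T_c$ approved by a few marginal gainers, and the certificates say nothing about their add-marginals. They are written for the outsider part of the blocking target; the gainer predicate counts approvals inside $B_c$.
\item The drift bounds alone do not make the stationary mean at least $n/126-n/189$. The successor bound is available only along arcs leaving $F=\{c:\delta_c\le0\}$. It comes from the one-blocker table when $h_x=0$ or $r_x\ge4$, and from the simultaneous two-blocker certificate otherwise. You then need the cut inequality $\pi(G)\ge\pi(F)$ for $G=\{c:\delta_c\ge n/126\}$, which follows from $P(c,G)=1$ on $F$ and stationarity. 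Only then do you get $0=\sum_c\pi_c\delta_c\ge(n/126-n/189)\pi(F)>0$.
\item Reducing to $|T_c\cap D|\le2$ would discard the negative cell $(h,r)=(2,3)$, so the $r=3$ two-blocker motifs cannot be avoided.
\end{enumerate}
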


\section{The certified eight-seat anchor}

Let $H_s=\sum_{j=1}^s1/j$, with $H_0=0$, and define the PAV score
\[
             \Phi(X)=\sum_{i\in N}H_{u_i(X)}.
\]
We use the following established exact theorem.

\begin{theorem}[Peters's eight-seat theorem]\label{thm:k8}
For every finite approval election, at least one global maximizer of
$\Phi$ among the eight-element committees is an eight-seat Hare-core
committee.
\end{theorem}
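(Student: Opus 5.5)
The plan is to argue by contradiction on a carefully selected global maximizer. Among all eight-element committees maximizing $\Phi$, fix $W$ that is extremal for a secondary tie-breaker. I would use lexicographically minimal number of blocking targets, or equivalently a potential that strictly decreases under the repair moves used below. Suppose some nonempty $T$ with $|T|\le 8$ blocks $W$. By Proposition~\ref{prop:integer}, the strict-gainer set $S$ then satisfies $8|S|\ge n|T|$. Write $T^+=T\setminus W$ and $T^0=T\cap W$. For $c\notin W$ let $\Delta_c=\sum_{i:c\in A_i}1/(u_i(W)+1)$ be the add-marginal. For $w\in W$ let $\rho_w=\sum_{i:w\in A_i}1/u_i(W)$ be the removal marginal.

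\textbf{Swap inequalities.} The first step records the only information optimality gives. For every $c\in T^+$ and $w\in W$, $\Phi(W-w+c)\le\Phi(W)$, which expands to $\Delta_c-\rho_w+\varepsilon_{c,w}\le 0$. Here $\varepsilon_{c,w}$ is the exact correction from voters approving both $c$ and $w$. I would average these inequalities over $w\in W\setminus T^0$ with nonnegative weights, using the identity $\sum_{w\in W}\rho_w=|\{i:u_i(W)\ge1\}|\le n$. Summing over $c\in T^+$ then gives an upper bound of roughly $|T^+|\,n/(8-|T^0|)$ on $\sum_{c\in T^+}\Delta_c$, after adjusting for the removal mass carried by $T^0$.

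\textbf{Lower bound from the blockers.} Each $i\in S$ with $u_i(W)=\ell$ approves at least $\ell+1-|A_i\cap T^0|$ candidates of $T^+$. It therefore contributes at least $(\ell+1-|A_i\cap T^0|)/(\ell+1)$ to $\sum_{c\in T^+}\Delta_c$, and it also contributes $1/\ell$ to the removal marginals of the members of $T^0$ it approves.

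\textbf{Reduction to a finite LP.} Next I would discretize voters into finitely many types. A type records $u_i(W)\le 8$, $|A_i\cap T^0|$, $|A_i\cap T^+|$, and membership in $S$. With type masses normalized by $n$, the swap inequalities, the blocking inequality $8|S|\ge n|T|$, and the marginal identities become a linear system. There is one such system for each pair $(|T^0|,|T^+|)$ with $|T|\le8$. I would show each system is infeasible, or feasible only on its boundary, by exhibiting rational Farkas multipliers checked in exact arithmetic. This is where $k\le8$ enters: for larger $k$ the harmonic weights no longer leave a positive margin. The known factor-$2$ approximation of PAV shows that the naive uniform averaging fails, so the weights on the swap inequalities must be type-dependent and nonuniform.

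\textbf{Main obstacle: equality cases.} I expect the hardest step to be the boundary cases where the certificate closes only with equality. This is exactly why the statement says ``at least one'' maximizer rather than ``every'' maximizer. In those cases every averaged swap is tight, so each $W-w+c$ with positive weight is again a $\Phi$-maximizer. I would first try to show that equality forces a rigid profile structure: blockers all have $u_i(W)=\ell$ for a single $\ell$, and the approvals of $T^+$ are disjoint from those of $W\setminus T^0$. In such a structure, some tight swap $W-w+c$ with $c\in T^+$ is strictly better for the secondary tie-breaker. This contradicts the choice of $W$, so the selected maximizer is in the eight-seat core.
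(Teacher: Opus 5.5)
The paper does not reprove this statement. It imports it as Theorem~4.5 of Peters's work, whose proof is itself an exact rational-certificate argument, and only checks that it transfers to the present model: empty ballots are deleted, which changes neither strict-gainer counts nor $\Phi$ and only enlarges $n|T|$, and the all-empty profile is trivial. You instead try to rebuild the result from scratch. Your architecture (symmetrized voter types, PAV-optimality rows, the blocking row $8|S|\ge n|T|$, rational Farkas multipliers, then a tie-breaking step) is the right kind of argument, but as written it is a plan whose two load-bearing steps are missing.

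The first missing step is the quantitative core. No multipliers are given, and the system you propose to certify rests on a false premise. One-swap inequalities are not ``the only information optimality gives'': a global maximizer satisfies $\Phi(W')\le\Phi(W)$ for every eight-set $W'$, including exchanges such as $W-\{w_1,w_2\}+\{c_1,c_2\}$. Dropping these weakens the type LP. Nothing in the proposal shows that the one-swap system is infeasible outside equality cases. Even this paper's own one-blocker certificate for $(h,r)=(0,2)$ puts weight $7/9$ on a two-for-two exchange row. Similarly, ``roughly $|T^+|n/(8-|T^0|)$ after adjusting for $T^0$'' and the claim that harmonic weights ``leave a positive margin'' only for $k\le 8$ are assertions, not inequalities.

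The second and more serious gap is the equality case, which is exactly what the words ``at least one'' depend on. Choosing $W$ to minimize the number of blocking targets gives no descent. Suppose the forced structure does let a tight swap $W'=W-w+c$ escape $T$. The voters who approved $w$ but not $c$ still lose utility, so $W'$ may be blocked by targets that did not block $W$. There is therefore no reason $W'$ has fewer blockers, and your ``or equivalently a potential'' conflates two unrelated criteria. This is the same escape-one-blocker, acquire-another phenomenon that the paper's introduction identifies as the real obstruction at nine seats. A complete argument must do two things:
\begin{enumerate}
\item derive, via complementary slackness, the full set of feasible equality configurations, rather than conjecture that all blockers share one $\ell$ and avoid $W\setminus T^0$;
\item exhibit a secondary objective, computable from that forced structure, that strictly improves under some tight swap, so that moving among the finitely many PAV maximizers terminates at a core committee.
\end{enumerate}
Neither is supplied, so the proposal does not establish the eight-seat statement.
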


Theorem~\ref{thm:k8} is Theorem 4.5 of Peters
\cite{Peters2025}; its source uses rational certificates rather than
floating-point feasibility tolerances.  If some voters have empty
approval sets, delete them before invoking the theorem.  Reinserting them
does not change any strict-gainer count or PAV score and only increases the
right-hand side $n|T|$ in (2).  If every ballot is empty, every committee is
core.  Thus Theorem~\ref{thm:k8} applies to the model here, including repeated
types and empty ballots.

Fix an eight-set $U$ supplied by Theorem~\ref{thm:k8}.  Write
\[
 O=C\setminus U,\qquad W_c=U\cup\{c\}\quad(c\in O),
 \qquad
 a_c=\sum_{i:c\in A_i}\frac1{u_i(U)+1}.                    \tag{3}
\]
Thus $a_c=\Phi(U\cup\{c\})-\Phi(U)$ is the exact PAV add marginal.

\section{Blockers as a closed response system}

\begin{lemma}[cancellation and same-target escape]\label{lem:cancel}
If $T$ blocks $W_c$, then $c\notin T$ and
$B(T):=T\setminus U$ is nonempty.  Moreover, for every $x\in B(T)$, the same
target $T$ does not block $W_x$.
\end{lemma}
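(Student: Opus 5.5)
The plan is to show all three assertions by comparing blocking of the nine-seat committee $W_c$ with blocking of the eight-seat core $U$, via the integer criterion of Proposition~\ref{prop:integer}. Throughout $k=9$ for $W_c$ and $k=8$ for $U$. Two facts drive everything.
\begin{itemize}
\item[(a)] Since $U\subseteq W_c$, we have $u_i(W_c)\ge u_i(U)$ for every voter, with $u_i(W_c)=u_i(U)+1$ exactly when $c\in A_i$.
\item[(b)] Any target that blocks $W_c$ has size at most nine, by the last line of the proof of Proposition~\ref{prop:integer}.
\end{itemize}

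\emph{Step 1: $c\notin T$.} Suppose $T$ blocks $W_c$ with $c\in T$, and put $T'=T\setminus\{c\}$. I claim every strict gainer $i$ for $(W_c,T)$ is a strict gainer for $(U,T')$.
\begin{itemize}
\item If $c\in A_i$, then $u_i(T')=u_i(T)-1>u_i(W_c)-1=u_i(U)$.
\item If $c\notin A_i$, then $u_i(T')=u_i(T)>u_i(W_c)=u_i(U)$.
\end{itemize}
Hence $q(U,T')\ge q(W_c,T)\ge n|T|/9$.

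First I rule out $T=\{c\}$. A gainer for $\{c\}$ against $W_c$ would need $u_i(\{c\})=1>u_i(W_c)$. That is impossible, since $c\in A_i$ forces $u_i(W_c)\ge1$. So $q(W_c,\{c\})=0<n$, and $\{c\}$ does not block.

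Thus $T'$ is nonempty with $|T'|=|T|-1\le 8$. Multiplying by $8$ gives
\[
8q(U,T')\ge \tfrac{8}{9}n|T|\ge n(|T|-1)=n|T'|.
\]
The middle inequality is equivalent to $|T|\le 9$, which holds by (b). By (1) with $k=8$, $T'$ blocks $U$. This contradicts the core stability of $U$ from Theorem~\ref{thm:k8}.

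\emph{Step 2: $B(T)\neq\emptyset$.} Suppose $T\subseteq U$. Then for every voter, $u_i(T)\le u_i(U)\le u_i(W_c)$, so $q(W_c,T)=0$. But $n|T|>0$, so (1) fails and $T$ does not block $W_c$, a contradiction.

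\emph{Step 3: same-target escape.} Take $x\in B(T)$, so $x\in O$ and $x\in T$. If $T$ blocked $W_x$, Step 1 applied with $x$ in place of $c$ would give $x\notin T$, which is false. So $T$ does not block $W_x$.

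The only real content is Step 1, specifically the slack computation $\tfrac89|T|\ge|T|-1$. This is exactly where the seat count nine enters, through the bound $|T|\le9$. The small case $T=\{c\}$ must be handled separately, because there $T'$ would be empty.
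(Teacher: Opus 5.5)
Your proof is correct and follows the paper's argument. You cancel $c$ (or $x$), observe that the same strict gainers prefer $T\setminus\{c\}$ to $U$, exclude $T=\{c\}$ and $T\subseteq U$ because they have no strict gainers, and reduce to a block of the eight-seat core $U$; the only cosmetic difference is that the paper writes $8q=9q-q\ge nt-q\ge n(t-1)$ using $q\le n$ directly, while you route the same fact through $|T|\le 9$.
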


\begin{proof}
Suppose first that $c\in T$, put $t=|T|$, and cancel $c$ from both sides of
each strict utility comparison.  The same $q=q(W_c,T)$ voters strictly
prefer $T\setminus\{c\}$ to $U$.  The reduced target is nonempty, since
$T=\{c\}\subseteq W_c$ has no strict gainer.  By the literal integer quota,
\[
 8q=9q-q\ge nt-q\ge n(t-1),
\]
so $T\setminus\{c\}$ would block the eight-seat core $U$, a contradiction.
If $T\setminus U$ were empty, then $T\subseteq U\subseteq W_c$ and again
there would be no strict gainer.  Finally, if the same $T$ blocked $W_x$ for
$x\in T\setminus U$, cancelling $x$ would give the identical eight-seat
contradiction.  Equality in the quota is included throughout.
\end{proof}

A nonempty set $X\subseteq O$ is a \emph{closed fixed-$U$ blocker system} if
for every $c\in X$ there is a blocker $T_c$ of $W_c$ whose response set
\[
             \varnothing\ne B_c:=T_c\setminus U
             \subseteq X\setminus\{c\}.                    \tag{4}
\]
If every extension $W_c$ is blocked, $O$ is such a system by
Lemma~\ref{lem:cancel}.  Finiteness then supplies an inclusion-minimal one.

\begin{lemma}[minimal trap is irreducible]\label{lem:strong}
Let $X$ be an inclusion-minimal closed system and choose any one internal
blocker satisfying (4) at every $c\in X$.  The digraph containing all arcs
$c\to x$ for $x\in B_c$ is strongly connected.
\end{lemma}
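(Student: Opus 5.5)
The plan is to use the standard fact that every finite digraph in which each vertex has out-degree at least one contains a closed strongly connected component (a sink component), and then show that minimality forces this sink component to be all of $X$.

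\textbf{Step 1: the digraph has positive out-degrees.} Let $G$ be the digraph on $X$ with arcs $c\to x$ for $x\in B_c$. By (4), each $B_c$ is nonempty and contained in $X\setminus\{c\}$, so every vertex of $G$ has at least one out-neighbour. All arcs stay inside $X$, and $G$ has no loops.

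\textbf{Step 2: extract a sink component.} Consider the condensation of $G$ into strongly connected components. This condensation is a finite acyclic digraph, so it has a component $Y$ with no arc leaving it. That is, for every $c\in Y$ we have $B_c\subseteq Y$.

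\textbf{Step 3: $Y$ is itself a closed system.} Take any $c\in Y$ and the same chosen blocker $T_c$. Then
\[
\varnothing\ne B_c\subseteq Y\setminus\{c\},
\]
since $B_c\subseteq X\setminus\{c\}$ and $B_c\subseteq Y$. So $Y$ is a nonempty closed fixed-$U$ blocker system with $Y\subseteq X$.

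\textbf{Step 4: minimality.} Inclusion-minimality of $X$ now gives $Y=X$. Hence $G$ consists of a single strongly connected component, i.e.\ it is strongly connected.

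One point needs care. The definition of a closed system only requires that \emph{some} blocker exists at each vertex. Here we fix one choice of blocker per vertex and restrict it to $Y$, which is legitimate because the existence requirement is witnessed by these same blockers. This argument works for any fixed choice of blockers, as the statement demands.

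I do not expect a real obstacle. The only subtle points are that the sink component must be nonempty and must inherit the non-self-response property, and both follow directly from (4).
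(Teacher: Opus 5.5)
Your proposal is correct and follows essentially the same argument as the paper: take a sink strongly connected component $Y$ of the condensation, note that the same chosen blockers satisfy $\varnothing\ne B_c\subseteq Y\setminus\{c\}$ on $Y$, so $Y$ is a closed system, and minimality forces $Y=X$. The paper phrases this as a contradiction (a sink component would otherwise be a proper closed subset), while you derive $Y=X$ directly; the difference is only presentational.
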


\begin{proof}
If the condensation had more than one vertex, a sink strongly connected
component $Y$ would be a proper nonempty subset of $X$.  Every chosen
response from a state in $Y$ remains in $Y$, so the same chosen blockers
would certify that $Y$ is closed, contrary to minimality.
\end{proof}

\section{Singleton responses cannot occur in a closed system}

This section records the equality rigidity needed to ensure that every
selected response set has at least two outsiders.  All statements concern
the same fixed global PAV maximizer $U$.

\begin{lemma}[singleton rigidity]\label{lem:singleton}
Suppose $H\subseteq U$, $x,c\notin U$, $x\ne c$, and
$T=H\cup\{x\}$ blocks $W_c$.  Put $h=|H|$, $D=U\setminus H$, and let $Q$
be the strict-gainer set.

If $h\le7$, then
\begin{enumerate}
\item $9|Q|=n(h+1)$;
\item voters in $Q$ approve $x$, disapprove $c$, and approve exactly $H$
      within $U$;
\item voters outside $Q$ disapprove $x$ and have a nonempty $U$-approval
      set contained in $D$; and
\item every swap $U-u+x$, $u\in U$, ties $U$ in PAV score.
\end{enumerate}
If $h=8$, every voter approves $x$ and disapproves $c$, every member of $U$
is universally approved, and the same eight swaps tie.
\end{lemma}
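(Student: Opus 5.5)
The plan is to play the blocking inequality~(1) for $T=H\cup\{x\}$ against the PAV optimality of $U$, applied to the eight swaps $U-u+x$, and to show that the two bounds meet. Every claimed structural property will then come from the fact that all intermediate inequalities must hold with equality.

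\emph{Step 1 (gainer conditions).} Write $s_i=u_i(U)$. By Proposition~\ref{prop:integer} with $k=9$ and $|T|=h+1$, we have $9|Q|\ge n(h+1)$. For $i\in Q$ the gain condition reads
\[
|A_i\cap H|+[x\in A_i]>s_i+[c\in A_i].
\]
Since $|A_i\cap H|\le s_i$, this forces $x\in A_i$, $c\notin A_i$ and $|A_i\cap H|=s_i$, that is, $A_i\cap U\subseteq H$. In particular $s_i\le h$ for every $i\in Q$.

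\emph{Step 2 (averaged swap bound).} Since $U$ is a global PAV maximizer among eight-sets, each $\Delta_u:=\Phi(U-u+x)-\Phi(U)$ is at most $0$. I compute $\sum_{u\in U}\Delta_u$ voter by voter.
\begin{itemize}
\item A voter approving $x$ contributes $0$ for each approved $u$ and $1/(s_i+1)$ for each unapproved $u$, so $(8-s_i)/(s_i+1)\ge0$ in total.
\item A voter not approving $x$ loses $1/s_i$ for each of its $s_i$ approved members, so $-1$ in total if $s_i\ge1$ and $0$ otherwise.
\end{itemize}
Using $s_i\le h$ on $Q$, this gives
\[
0\ge\sum_u\Delta_u\ge |Q|\frac{8-h}{h+1}-(n-|Q|).
\]
Rearranging, $9|Q|\le n(h+1)$.

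\emph{Step 3 (case $h\le7$).} Together with Step~1 this yields item~1, $9|Q|=n(h+1)$, and forces equality throughout Step~2.
\begin{itemize}
\item Each $\Delta_u\le0$ and their sum is $0$, so every swap ties; this is item~4.
\item Each $i\in Q$ has $s_i=h$. This uses $h\le7$, so that $(8-s)/(s+1)$ is strictly decreasing in the relevant range. Combined with $A_i\cap U\subseteq H$ it gives $A_i\cap U=H$, completing item~2.
\item Each voter outside $Q$ must contribute exactly $-1$. So it disapproves $x$ and has $s_i\ge1$; an $x$-approver would contribute $\ge0>-1$.
\end{itemize}
For the containment in $D$, I use the individual ties. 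For $u\in H$,
\[
\Delta_u=\sum_{i\in Q,\,u\notin A_i}\frac{1}{h+1}-\sum_{i\notin Q,\,u\in A_i}\frac1{s_i}.
\]
The first sum is empty because $Q$-voters approve all of $H$. So $\Delta_u=0$ forces no non-$Q$ voter to approve $u$. Hence the $U$-approval sets of non-$Q$ voters lie in $D$, which is item~3.

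\emph{Step 4 (case $h=8$).} Here $9|Q|\ge9n$, so $Q=N$. Step~1 then says every voter approves $x$ and disapproves $c$. Step~2 reads $\sum_i(8-s_i)/(s_i+1)\le0$ with nonnegative terms, so $s_i=8$ for all $i$, and every member of $U$ is universally approved. Each swap then removes an approved candidate and adds an approved one for every voter, so every swap ties.

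The main obstacle is the averaging identity in Step~2, specifically handling correctly the voters with $s_i=0$ and the non-$Q$ voters who approve $x$. The equality bookkeeping in Step~3 depends on getting that contribution table exactly right.
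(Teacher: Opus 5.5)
Your proposal is correct and follows essentially the same route as the paper. Both arguments play the quota $9|Q|\ge n(h+1)$ against summed swaps $U-u+x$ and read every structural claim off the equality cases.

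The only difference is bookkeeping. The paper sums only over $u\in D$, so tightness of $-z_i/s_i\ge-1$ gives containment in $D$ directly, and the $H$-indexed ties follow afterward from the derived incidence. You sum over all of $U$, which gives all eight ties at once, and you then deduce containment in $D$ from the $H$-indexed ties. Your appeal to $h\le7$ to get $s_i=h$ is unnecessary, since $(8-s)/(s+1)=9/(s+1)-1$ is strictly decreasing for all $s\ge0$.
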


\begin{proof}
Assume $h\le7$ and put $d=8-h$.  For voter $i$, set
$s_i=u_i(U)$ and $z_i=|A_i\cap D|$.  Sum, over $u\in D$, that voter's PAV
changes in $U-u+x$.  The contribution is
\[
 \frac{d-z_i}{s_i+1}\quad(x\in A_i),\qquad
 -\frac{z_i}{s_i}\quad(x\notin A_i, s_i>0),
 \qquad 0\quad(x\notin A_i, s_i=0).                       \tag{5}
\]
The total $S_x$ is nonpositive by global PAV maximality.  A gainer in $Q$
approves $x$, disapproves $c$, has $z_i=0$ and $s_i\le h$, so contributes at
least $d/(h+1)$.  Every other voter contributes at least $-1$.  Hence
\[
 0\ge S_x\ge |Q|\frac d{h+1}-(n-|Q|)
       =\frac{9|Q|}{h+1}-n\ge0,                             \tag{6}
\]
where the last inequality is exactly $9|Q|\ge n(h+1)$.
Equality at every step yields the first three assertions.  The sum of the
nonpositive labelled swap changes is zero, so all swaps indexed by $D$ tie;
the derived incidence makes swaps indexed by $H$ tie as well.

If $h=8$, the target has size nine, so its block forces $|Q|=n$.
Strict comparison of $U+x$ with $U+c$ says that every voter approves $x$
and disapproves $c$.  A voter disapproving some $u\in U$ would then strictly
gain in $U-u+x$, while nobody could lose, contradicting global maximality.
The final assertions follow.
\end{proof}

\begin{lemma}[tied-candidate absorption]\label{lem:absorb}
Let $x\notin U$ and suppose $\Phi(U-u+x)=\Phi(U)$ for every $u\in U$.
For a target $T=H\cup R$ excluding $x$, where $H\subseteq U$ and
$\varnothing\ne R\subseteq C\setminus(U\cup\{x\})$, put $h=|H|$ and
$r=|R|$, and suppose $h+r\le9$.  Then
\[
 \frac{q(W_x,T)}n\le
 \begin{cases}
 (h+r-1)/9,&h\ge1,\\
 r/10,&h=0, r\ge2,\\
 1/9,&h=0, r=1.
 \end{cases}                                               \tag{7}
\]
Thus only a bare outsider singleton can block $W_x$.
\end{lemma}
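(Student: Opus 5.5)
The plan is to exploit the fact that the tie hypothesis makes every eight-element subset of $W_x=U\cup\{x\}$ a global PAV maximizer. Indeed $\Phi(U-u+x)=\Phi(U)$ for every $u\in U$, and $U=W_x-x$ itself is a maximizer. So all nine sets $V_w:=W_x\setminus\{w\}$, $w\in W_x$, are global maximizers of $\Phi$ among eight-sets. I will then compare $T$ with $W_x$ through the swap inequalities
\[
\Phi(V_w-v+y)-\Phi(V_w)\le 0 \qquad (v\in V_w,\ y\in R),
\]
rather than through $U$ alone. The resulting double-counting inequality should have the effective committee size $10=|W_x|+1$, which is the source of the denominator in the case $h=0$.

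First I will fix the voter-level data $s_i=u_i(W_x)$, $\rho_i=|A_i\cap R|$ and $\eta_i=|A_i\cap H|$. A voter $i$ is a strict gainer exactly when $\eta_i+\rho_i\ge s_i+1$. Next I will sum the swap changes over all $w\in W_x$, all $v\in W_x\setminus(H\cup\{w\})$ and all $y\in R$. Swaps that remove a member of $H$ are omitted, because $H$ is shared with $T$. Each such swap change is nonpositive, so the total $\Sigma$ satisfies $\Sigma\le 0$. The main step is a per-voter lower bound of the form
\[
\Sigma_i\ \ge\ \alpha\cdot[i\text{ gains}]-\beta ,
\]
with constants $\alpha,\beta$ depending only on $h,r$. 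These constants should be chosen so that $\Sigma\le0$ rearranges to the bounds in (7). For a gainer, $\rho_i\ge s_i+1-\eta_i$, so the added candidates $y$ contribute roughly $\rho_i(9-h)/(s_i+1)$ per removal position, dominating the losses $|A_i\cap(W_x\setminus H)|/s_i$. For a non-gainer the net loss is at most a constant, exactly as in the proof of Lemma~\ref{lem:singleton}, where the loss was at most $-1$. This mirrors inequality (6), but it is averaged over nine maximizers instead of one.

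I expect the main obstacle to be the exact per-voter inequality: showing that the worst case for a gainer is $s_i=\eta_i+\rho_i-1$ with approvals concentrated appropriately. This uses the monotonicity and convexity of the harmonic increments $1/(s+1)$. It also requires handling the three cases separately.
\begin{itemize}
\item When $h\ge1$, the shared part $H$ lowers the number of removable positions from $9$ to $9-h$. This should produce $(h+r-1)/9$: compared with the blocking threshold $9q\ge n(h+r)$ from (1), the target effectively loses one seat.
\item When $h=0$ and $r\ge2$, the full symmetry over the nine removals should give $r/10$.
\item When $h=0$ and $r=1$, I will argue directly. A gainer approves the single $y$ and nothing in $W_x$. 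Summing the nine swaps $V_w+y-v$ over $v$ bounds the number of such voters by $n/9$, in the same way as (6).
\end{itemize}
I will first try to verify the per-voter bound by a finite case check over $(s_i,\eta_i,\rho_i)$ with $s_i\le 9$, $\rho_i\le r$, $\eta_i\le h$.

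Finally, the consequence follows by comparing (7) with the integer blocking criterion (1), which with $k=9$ reads $9q\ge n(h+r)$. In the case $h\ge1$ the bound gives $9q\le n(h+r-1)<n(h+r)$. In the case $h=0$, $r\ge2$ it gives $9q\le 9nr/10<nr$. Only in the case $h=0$, $r=1$ can equality $9q=n$ hold, so only a bare outsider singleton can block $W_x$.
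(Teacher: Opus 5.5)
\paragraph{Where the proposal breaks.} Your setup is valid. The ties do make every $V_w=W_x\setminus\{w\}$ a global maximizer, and each row $\Phi(V_w-v+y)\le\Phi(V_w)$ is legitimate. The gap is the central step. For $1\le h\le7$ there is no per-voter bound $\Sigma_i\ge\alpha[i\text{ gains}]-\beta$ with $\beta/\alpha$ equal to the right side of (7).

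Carrying out your uniform sum, consider a voter with $s_i=u_i(W_x)\ge2$, and put $m_i=|A_i\cap(W_x\setminus H)|$, $j_i=9-h-m_i$ and $\rho_i=|A_i\cap R|$. This voter contributes exactly
\[
 \Sigma_i=9\Bigl(\frac{\rho_i j_i}{s_i+1}-\frac{(r-\rho_i)m_i}{s_i}\Bigr).
\]
That is nine times the ordinary swap sum for $W_x$ treated as a nine-seat committee; the ties enter only to make this sum nonpositive.

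Now take $(h,r)=(2,2)$ and put $D=U\setminus H$.
\begin{itemize}
\item A voter with ballot $H\cup\{v\}$, $v\in R$, is a gainer with $\Sigma_i=21$.
\item A voter approving two members of $D$ and nothing else has $\Sigma_i=-18$.
\end{itemize}
Six voters of the first kind and seven of the second give $\Sigma=126-126=0$, so your inequality holds. Yet $q/n=6/13$, which exceeds the claimed $(h+r-1)/9=1/3$ and even the blocking threshold, since $9q=54\ge52=n|T|$. Hence no choice of $\alpha,\beta$ works: any valid pair forces $\beta/\alpha\ge 18/39=6/13$. Your argument therefore yields neither (7) nor the conclusion that $T$ cannot block.

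The same happens at $(h,r)=(1,2)$: the same two ballot shapes in ratio $1:2$ give $\Sigma=36-36=0$ with $9q=n|T|$ exactly. Your scheme does reproduce $r/10$ for $h=0$ and $8/9$ for $h=8$, and the $h=0$, $r=1$ case is fine. The failure is confined to $1\le h\le7$.

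\paragraph{What is missing.} The ties must be used as equality rows with free-sign multipliers, not merely to certify that new sets are maximizers. In the profile above, $\Phi(U-u+x)-\Phi(U)=-3$ for every $u\in H$, and your sum cannot detect this.

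The paper's certificate (9)--(10) is built as follows.
\begin{itemize}
\item It weights the competitors $U-\{u,w\}+\{x,v\}$ and $U-u+v$ (with $u\in H$, $w\in D$, $v\in R$) by $\alpha,\beta\ge0$. In your language, these are the swaps out of $V_u$ for $u\in H$, re-based at $U$.
\item It adds the tie average $E_H$ with the negative multiplier $\gamma=-(d+1)(h+r-1)/9$, and $E_D$ with the coefficient $\eta$ of (10).
\item It checks the resulting pointwise inequality exactly over all voter types.
\end{itemize}
To close the case $1\le h\le7$ you need non-uniform weights of this kind together with standalone tie rows. The finite case check you planned would only confirm that the uniform sum fails.
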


\begin{proof}[Exact pointwise certificate proof]
Average ballots under permutations inside $H$, $U\setminus H$, and $R$.
A voter type is given by the four integers
\[
 p=|A_i\cap H|,\quad s=|A_i\cap(U\setminus H)|,
 \quad z=\1[x\in A_i],\quad y=|A_i\cap R|,                 \tag{8}
\]
and is a strict gainer precisely when $y>s+z$.  For $1\le h\le7$, put
$d=8-h$ and let $E_H,E_D$ be the type's average PAV changes in $U-u+x$
over $u\in H,D$, respectively.  Let $\Delta_B$ be the average change in
$U-u+v$ over $u\in H,v\in R$, and $\Delta_A$ the average change in
$U-\{u,w\}+\{x,v\}$ over $u\in H,w\in D,v\in R$.  Direct expansion of
harmonic increments gives, for every integral type (8),
\[
 \1[y>s+z]\le b+\alpha\Delta_A+\beta\Delta_B
                 +\gamma E_H+\eta E_D,                     \tag{9}
\]
where
\[
 b=\frac{h+r-1}{9},\quad \alpha=\frac{rd}{d+2},\quad
 \beta=\frac r{d+2},\quad
 \gamma=-\frac{(d+1)(h+r-1)}9,
 \quad \eta=\frac{d(h-1)(d+2-r)}{9(d+2)}.                 \tag{10}
\]
The aggregate $E_H,E_D$ terms vanish by the eight ties, while the aggregate
$\Delta_A,\Delta_B$ terms are nonpositive by global PAV optimality and
$\alpha,\beta\ge0$.  Summing (9) proves the first line of (7).

For $h=0,r\ge2$, let $E$ average $U-u+x$, let $\Delta_B$ average
$U-u+v$, and let $\Delta_A$ average
$U-\{u,w\}+\{x,v\}$.  The exact pointwise certificate is
\[
 \1[y>s+z]\le \frac r{10}+\frac{28r}{45}\Delta_A
       +\frac{8r}{45}\Delta_B-\frac{32r}{45}E.            \tag{11}
\]
For $h=8$ the corresponding certificate is
\[
 \1[y>z]\le \frac89+\frac19\Delta_B-\frac89E_H.          \tag{12}
\]
The same summation proves the appropriate bounds.  Finally, when
$h=0,r=1$, a gainer approves the unique member $v$ of $R$ and has
$u_i(U)=z=0$.  Summing the eight inequalities
$\Phi(U-u+v)-\Phi(U)\le0$ gives $0\ge9q-n$, proving the last line.

For auditability, the standard-library verifier included with this package
reconstructs every harmonic average in (9)--(12) and checks all $6,384$
middle types, $936$ empty-$H$ types, and $36$ $h=8$ types using
\texttt{fractions.Fraction}.  Thus ``direct expansion'' here is a finite
exact identity check, not a floating-point assertion.
\end{proof}

\begin{lemma}[no two singleton responses in succession]\label{lem:noss}
After a singleton block $H\cup\{x\}$ of $W_c$, no target with exactly one
outsider can block $W_x$.
\end{lemma}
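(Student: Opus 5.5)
The plan is to combine Lemma~\ref{lem:singleton} with Lemma~\ref{lem:absorb}. Suppose $T=H\cup\{x\}$, with $H\subseteq U$ and $x,c\notin U$, blocks $W_c$. By Lemma~\ref{lem:cancel}, $x\ne c$, so Lemma~\ref{lem:singleton} applies for every value of $h=|H|$. In each case (both $h\le7$ and $h=8$) its final assertion is that all eight swaps $U-u+x$, $u\in U$, tie $U$ in PAV score. This is exactly the hypothesis of Lemma~\ref{lem:absorb} for the candidate $x$.

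Now let $T'=H'\cup R$ be any target with exactly one outsider that blocks $W_x$, so $H'\subseteq U$ and $R=\{v\}$. Lemma~\ref{lem:cancel} applied to $W_x$ gives $x\notin T'$, so $v\ne x$ and $T'$ satisfies the setup of Lemma~\ref{lem:absorb} with $r=1$. Since $T'$ has $h'+1$ elements and may be assumed to satisfy $|T'|\le9$ (Proposition~\ref{prop:integer}), we have $h'+r\le9$. I split into two cases on $h'$.

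\emph{Case $h'\ge1$.} The first line of (7) gives $q(W_x,T')/n\le h'/9$. Blocking by (1) would require $9q\ge n(h'+1)$, i.e.\ $q/n\ge (h'+1)/9>h'/9$, a contradiction.

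\emph{Case $h'=0$.} Here $T'=\{v\}$ is a bare outsider singleton, and (7) only gives $q/n\le1/9$. This is compatible with blocking at equality, so this case is the main obstacle and needs the extra structure from Lemma~\ref{lem:singleton}. A gainer for $\{v\}$ against $W_x$ approves $v$ and has $u_i(U)=0$ and $x\notin A_i$. I would first try to use item 3 of Lemma~\ref{lem:singleton}, which says every voter has a nonempty $U$-approval set. When $h\le7$, voters in $Q$ approve $H$, and $H$ is nonempty unless $h=0$; voters outside $Q$ have a nonempty approval set in $D$. When $h=8$, every member of $U$ is universally approved. In all of these situations no voter has $u_i(U)=0$, so $q(W_x,\{v\})=0$ and nothing blocks.

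The remaining subcase is $h=0$, where $Q$-voters approve exactly $\varnothing$ within $U$ and approve $x$. Such voters are not gainers for $\{v\}$, because they approve $x$ and so $u_i(W_x)\ge1=u_i(\{v\})$. Voters outside $Q$ still have nonempty $U$-approval. So again $q=0$. The key check is therefore that every voter either approves something in $U$ or approves $x$, which follows from items 2 and 3 of Lemma~\ref{lem:singleton} together with the $h=8$ statement. With that, $\{v\}$ has no strict gainers and cannot block $W_x$.
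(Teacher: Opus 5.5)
Your proof is correct, but it takes a different route from the paper's. You pass the tie conclusion of Lemma~\ref{lem:singleton} into Lemma~\ref{lem:absorb}. Its first line with $r=1$ gives $q(W_x,H'\cup\{v\})/n\le h'/9$, just below the quota $(h'+1)/9$, which disposes of every $h'\ge1$ at once. Only the bare singleton $\{v\}$ then needs the rigidity partition. There your argument coincides with the paper's $h'=0$ boundary case: a gainer would need empty $U$-support and would have to disapprove $x$, and no voter does both.

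The paper never invokes Lemma~\ref{lem:absorb} here. It applies Lemma~\ref{lem:singleton} to the second block as well, so the second gainer set $Q'$ disapproves $x$ and approves exactly $H'$ within $U$. The first block's partition then forces $\varnothing\ne H'\subseteq D$. For $u\in H'$, the tied swap $U-u+x$ receives $n/9$ from $Q$ and $-n(h'+1)/(9h')<-n/9$ from $Q'$, while every other voter weakly loses. The swap is therefore strictly PAV-worse than $U$, a contradiction. The case $h'=8$ is excluded separately, since it would make every voter disapprove $x$ while $Q\ne\varnothing$ approves it.

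What each route buys: the paper's argument is purely symbolic, exposes the mechanism, and keeps Lemma~\ref{lem:noss} outside the computer-assisted trusted base. Yours is shorter and more modular, but it makes this lemma depend on the $r=1$ portion of the exact certificate check behind (9)--(10) and (12).

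A small simplification: in the bare-singleton case you need not split off $h=0$. Item~2 of Lemma~\ref{lem:singleton} already says every member of $Q$ approves $x$ for every $h\le7$.
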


\begin{proof}
For the first block with $h=8$, every voter approves $x$, whereas a singleton
response blocker of $W_x$ needs a positive mass of gainers disapproving $x$.
Suppose $h\le7$ and a second singleton target $H'\cup\{y\}$ blocks $W_x$.
Apply Lemma~\ref{lem:singleton} to both blocks.  The second gainer set $Q'$
disapproves $x$ and has $U$-approval set exactly $H'$.  By the first block's
rigidity, $H'$ is nonempty and $H'\subseteq D=U\setminus H$.  Pick
$u\in H'$.  In the tied swap $U-u+x$, the first gainer set contributes
\[
        \frac{|Q|}{h+1}=\frac n9,
\]
while the second gainer set contributes
\[
        -\frac{|Q'|}{h'}=-\frac{n(h'+1)}{9h'}<-\frac n9.
\]
Every remaining voter either loses or is unchanged, by the first rigidity
partition.  The swap is therefore strictly PAV-worse than $U$, contradicting
that it ties.  The $h'=0$ and $h'=8$ boundary cases contradict, respectively,
the nonempty-$U$ support of first-block nongainers and the positive mass of
first-block gainers who approve $x$.
\end{proof}

\begin{corollary}[response size]\label{cor:r2}
Every internal blocker in a closed fixed-$U$ system has
$|B_c|\ge2$.
\end{corollary}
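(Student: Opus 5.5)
We argue by contradiction: suppose some state $c\in X$ of an inclusion-minimal closed fixed-$U$ system has a chosen internal blocker $T_c$ with $|B_c|=1$. Then $B_c=\{x\}$ for some $x\in X\setminus\{c\}$, and $T_c=H\cup\{x\}$ with $H\subseteq U$. This is exactly the setting of Lemma~\ref{lem:singleton}, which we apply to this block of $W_c$.

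In either case $h\le7$ or $h=8$, Lemma~\ref{lem:singleton} gives that all eight swaps $U-u+x$, $u\in U$, tie $U$ in PAV score. Hence $x$ satisfies the hypothesis of Lemma~\ref{lem:absorb}.

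Since $x\in X$ and the system is closed, $W_x$ has an internal blocker $T_x$ with $\varnothing\ne B_x\subseteq X\setminus\{x\}$. By Lemma~\ref{lem:cancel} we have $x\notin T_x$, so we may write $T_x=H'\cup R$ with $H'\subseteq U$ and $\varnothing\ne R=B_x\subseteq C\setminus(U\cup\{x\})$. Moreover $|T_x|\le 9$ by Proposition~\ref{prop:integer}. We now read off (7) against the blocking inequality $9q(W_x,T_x)\ge n(h'+r)$:
\begin{itemize}
\item if $h'\ge1$, (7) gives $9q\le n(h'+r-1)<n(h'+r)$, which is impossible;
\item if $h'=0$ and $r\ge2$, (7) gives $9q\le 9nr/10<nr$, which is impossible.
\end{itemize}
So the only surviving possibility is $h'=0$ and $r=1$, that is, $T_x=\{y\}$ is a bare outsider singleton. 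But then $T_x$ is a target with exactly one outsider, and Lemma~\ref{lem:noss} forbids any such target from blocking $W_x$ right after the singleton block $H\cup\{x\}$ of $W_c$. This contradiction shows $|B_c|\ge2$ for every $c\in X$.

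The main point needing care is the bookkeeping that makes the lemmas apply verbatim. First, Lemma~\ref{lem:noss} covers all one-outsider targets, including $H'=\varnothing$ and $H'=U$; its $h'=0$ boundary case is exactly the bare singleton that Lemma~\ref{lem:absorb} leaves open. Second, the tie conclusion of Lemma~\ref{lem:singleton} must hold in both of its cases. It does: for $h\le7$ by item 4, and for $h=8$ by the explicit final statement.

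Strictly, Lemma~\ref{lem:noss} alone already yields the contradiction once $x$ is in $X$, since by Lemma~\ref{lem:cancel} any blocker of $W_x$ has a nonempty response set, and closure supplies one. Lemma~\ref{lem:absorb} is needed only to reduce all larger responses to the one-outsider case, which is the case Lemma~\ref{lem:noss} excludes. No computation beyond comparing (7) with the quota is required.
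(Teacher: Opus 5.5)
Your proof is correct and is essentially the paper's argument with the bookkeeping written out. Singleton rigidity makes all eight $x$-swaps tie, closure supplies an internal blocker of $W_x$, comparing Lemma~\ref{lem:absorb} against the quota leaves only a bare outsider singleton, and Lemma~\ref{lem:noss} excludes that; one slip is your closing claim that Lemma~\ref{lem:noss} alone already gives the contradiction, which is false because the internal blocker of $W_x$ could have two or more outsiders, and that is exactly why Lemma~\ref{lem:absorb} is needed (as your next sentence concedes).
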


\begin{proof}
If an internal blocker had $B_c=\{x\}$, Lemma~\ref{lem:singleton} would make
all $x$-swaps tie.  Because $x$ lies in the closed system, $W_x$ has an
internal blocker.  Lemma~\ref{lem:absorb} says that it can only be a bare
singleton, while Lemma~\ref{lem:noss} forbids it.
\end{proof}

\section{One-blocker PAV drift}

For a selected blocker write
\[
 T_c=H_c\cup B_c,\qquad H_c=T_c\cap U,\qquad
 h_c=|H_c|,\quad r_c=|B_c|,
\]
and define its add-marginal drift
\[
 \delta_c=\frac1{r_c}\sum_{x\in B_c}a_x-a_c.              \tag{13}
\]

\begin{lemma}[exact 36-cell drift theorem]\label{lem:drift}
If $T_c$ blocks $W_c$ and $r_c\ge2$, then
\[
                 \delta_c\ge-\frac n{189}.                 \tag{14}
\]
If $h_c=0$ or $r_c\ge4$, then
\[
                 \delta_c\ge\frac n{126}.                  \tag{15}
\]
In particular, $\delta_c\le0$ implies $h_c>0$ and
$r_c\in\{2,3\}$.
\end{lemma}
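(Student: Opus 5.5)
The approach mirrors the proof of Lemma~\ref{lem:absorb}: symmetrize, reduce to finitely many integral voter types, and exhibit exact rational Farkas certificates, one per cell $(h,r)$ with $h+r\le9$, $r\ge2$, $h\le 8$. There are exactly $36$ such cells, matching the lemma's name.

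Fix the blocker $T_c=H\cup B$ with $h=|H|$, $r=|B|$, $D=U\setminus H$, $d=8-h$. We average ballots over permutations of $H$, $D$ and $B$, which leaves $\delta_c$ unchanged. Each voter is then described by the integers
\[
p=|A_i\cap H|,\quad s=|A_i\cap D|,\quad y=|A_i\cap B|,\quad w=\1[c\in A_i].
\]
By Lemma~\ref{lem:cancel} we have $c\notin T_c$, so a voter is a strict gainer exactly when $p+y>p+s+w$, that is, $y>s+w$. The voter's contribution to $\delta_c$ is $(y/r-w)/(p+s+1)$, since $a_x$ charges each approver $1/(u_i(U)+1)$.

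The available linear information has three sources. First, blocking gives $\sum_i \1[y_i>s_i+w_i]\ge n(h+r)/9$ by Proposition~\ref{prop:integer}. Second, global PAV optimality of $U$ says that the averaged swap changes $\Phi(U-u+v)-\Phi(U)$ are nonpositive, for $u\in H$ or $u\in D$ and $v\in B\cup\{c\}$. The double swaps $U-\{u,w\}+\{v,v'\}$ with $v,v'\in B\cup\{c\}$ are nonpositive as well. Third, the core property of $U$ contributes inequalities of the form $8q(U,T')<n|T'|$ for the reduced targets $H\cup B'$, and these can be used in their linear-in-indicator form. For each cell, the target inequality (14) is a pointwise certificate
\[
\frac{y/r-w}{p+s+1}\ \ge\ -\frac1{189}+\lambda\Bigl(\1[y>s+w]-\frac{h+r}9\Bigr)+\sum_j\mu_j\Delta_j,
\]
with $\lambda\ge0$ and $\mu_j\le0$ multiplying the averaged nonpositive swap changes $\Delta_j$. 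Summing this over voters yields (14). The certificate for (15) has the same shape, with $+1/126$ in place of $-1/189$, and is required only in the cells with $h=0$ or $r\ge4$. The final assertion is then immediate: if $\delta_c\le0$, then (15) fails, so $h>0$ and $r<4$, and Corollary~\ref{cor:r2} gives $r\ge2$.

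The coefficients are found by solving, for each cell, the small LP over all integral types with $p\le h$, $s\le d$, $y\le r$ and $w\in\{0,1\}$. The LP minimizes the $n$-normalized drift subject to these constraints, and the rational dual is then extracted. Each certificate is checked exactly with \texttt{fractions.Fraction}, exactly as in Lemma~\ref{lem:absorb}.

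The main obstacle is that the LP optimum must actually reach $-1/189$ and $1/126$. Single swaps alone may not suffice, because a non-gaining voter approving $c$ with small $p+s$ pulls $\delta_c$ down heavily. I would first try adding the double swaps that bring in $c$ and an element of $B$ together. I would also use the eight-seat core inequalities for subsets $B'\subseteq B$, which limit how concentrated the gainers can be. If a cell still fails, I would introduce $|A_i\cap B|$-refined swap averages, in the spirit of the two-level averages $\Delta_A$ and $\Delta_B$ of (9).
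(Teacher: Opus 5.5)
Your proposal is essentially the paper's proof: the same averaging over $H_c$, $U\setminus H_c$ and $B_c$ with $c$ fixed, the same orbit types $(p,s,y,w)$ (which give exactly the $6{,}732$ columns across the $36$ cells), the same gainer predicate $y>s+w$ and drift column, the exact block row, and per-cell nonnegative rational Farkas multipliers checked pointwise in exact arithmetic. The point you leave open, whether single and double swaps are enough, is settled in the paper by keeping every orbit-average of an integral eight-seat PAV competitor inside $U\cup B_c\cup\{c\}$ (swaps of every size), together with the summed $8r$ one-swap rows and only the single weak core row for $T_c$ itself, so you should include that full competitor family rather than stop at double swaps.
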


The stronger exact cell table certified in the package is as follows; each
entry is a lower bound on $\delta_c/n$.
\[
\begin{array}{c|rrrrrrrr}
 r\backslash h&0&1&2&3&4&5&6&7\\ \hline
2&1/72&0&-1/189&-1/216&0&1/108&1/18&1/9\\
3&1/72&1/324&-1/567&0&1/135&1/18&1/9&\\
4&1/72&53/6048&1/126&17/1080&1/18&1/9&&\\
5&38/2259&119/6012&29893/1097460&1/18&1/9&&&\\
6&2005/93618&445/14013&1/18&1/9&&&&\\
7&170/5403&1/18&1/9&&&&&\\
8&1/18&1/9&&&&&&\\
9&1/9&&&&&&&
\end{array}                                                \tag{16}
\]

\begin{proof}[Certificate proof of Lemma~\ref{lem:drift}]
Normalize voter mass to one and average over permutations inside
$H_c$, $U\setminus H_c$, and $B_c$, fixing $c$.  A voter orbit is
$(y,z,e,b)$, the numbers approved in these first three classes and the bit
for $c$.  Put $s=y+z$.  Its exact strict-gainer indicator and drift
contribution are
\[
 g=\1[e>z+b],\qquad d=\frac{e/r-b}{s+1}.                   \tag{17}
\]
The rational system retains the sum of all $8r$ one-swap PAV rows, the
matching weak eight-seat core row, every orbit-average of an integral
eight-seat PAV competitor in $U\cup B_c\cup\{c\}$, and the exact block row
\[
             -\sum_v g(v)p_v\le-\frac{h+r}{9}.             \tag{18}
\]
All are necessary consequences of the hypotheses; for a target of size
nine the weak core row is merely the valid bound $q/n\le1<9/8$.

Write these rows as $R_jp\le b_j$.  For each of the 36 pairs
$h\ge0,r\ge2,h+r\le9$, the JSON archive gives nonnegative rational
$\lambda_j$ and a rational $\mu$ such that, for every orbit $v$,
\[
 d(v)+\sum_j\lambda_jR_j(v)\ge\mu,\qquad
 L(h,r)=\mu-\sum_j\lambda_jb_j.                            \tag{19}
\]
Multiplying by $p_v$, summing, and using the retained rows proves
$\delta_c/n\ge L(h,r)$.  The independent verifier regenerates all $6,732$
columns and checks (19) in exact rational arithmetic.  Its output verifies
that the only negative cells are $(2,2),(3,2),(2,3)$, the only zero cells
are $(1,2),(4,2),(3,3)$, the minimum for $r\le3$ is $-1/189$, and the
minimum for $r\ge4$ is $1/126$.  The $h=0$ entries in (16) are all at least
$1/72$.  These facts give (14)--(15).
\end{proof}

\section{The simultaneous two-blocker separation}

The next lemma is the key closed-system ingredient.  It models two blockers
in the \emph{same} voter profile; combining two separately feasible local
LPs would not suffice.

\begin{lemma}[destination drift]\label{lem:destination}
Let $T_c=H_c\cup B_c$ block $W_c$, let $x\in B_c$, and let
$T_x=H_x\cup B_x$ block $W_x$.  Suppose
\[
 h_c,h_x>0,\qquad r_c,r_x\in\{2,3\},\qquad \delta_c\le0.  \tag{20}
\]
Then
\[
                         \delta_x>\frac n{126}.             \tag{21}
\]
\end{lemma}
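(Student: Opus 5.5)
The plan is to prove Lemma~\ref{lem:destination} by one joint rational Farkas certificate, in the style of Lemma~\ref{lem:drift}, but over a single voter distribution that carries both blocks at once. Normalize voter mass to one. The relevant candidates are $U$, $c$, $B_c\ni x$, and $B_x$. Lemma~\ref{lem:cancel} gives $c\notin T_c$ and $x\notin T_x$, but $B_x$ may contain $c$ and may overlap $B_c\setminus\{x\}$. Likewise, $H_c$ and $H_x$ may overlap inside $U$.

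The first step is to enumerate the finitely many \emph{motifs}. A motif records $h_c,h_x\ge1$, $r_c,r_x\in\{2,3\}$, the size of $H_c\cap H_x$ (together with $h_c+h_x-|H_c\cap H_x|\le8$), whether $c\in B_x$, and how large $B_x\cap B_c$ is. Modulo the symmetries of $U$ and of the outsider classes, this should reduce to the symmetry-complete family of 19 two-blocker motifs. For each motif we average ballots over permutations inside every atom of the Venn partition of $U$ by $H_c,H_x$ and of the outsiders by $B_c\setminus\{x\}$, $\{x\}$, $\{c\}$, and $B_x$. A voter orbit is then the vector of approval counts in each atom. Both strict-gainer indicators $g_c=\1[\,|A\cap T_c|>u(U)+\1[c\in A]\,]$ and $g_x=\1[\,|A\cap T_x|>u(U)+\1[x\in A]\,]$ are functions of the orbit. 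So are the pointwise contributions to $\delta_c$ and $\delta_x$, which take the form $(\text{avg approvals in }B)-(\text{bit})$ divided by $s+1$, as in (17).

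Next we assemble the retained rows, all valid for the single profile:
\begin{itemize}
\item the two block rows $-\sum g_c p\le-(h_c+r_c)/9$ and $-\sum g_x p\le-(h_x+r_x)/9$;
\item the hypothesis row $\sum d_c\,p\le0$, encoding $\delta_c\le 0$;
\item all orbit-averaged one-swap and two-swap PAV optimality rows $\Phi(U-S+S')\le\Phi(U)$ for $S'$ drawn from the outsider atoms;
\item the weak eight-seat core rows for the cancelled targets $T_c$ and $T_x$, as in the proof of Lemma~\ref{lem:cancel}.
\end{itemize}
For each motif we then seek nonnegative $\lambda_j$ and a rational $\mu$ with $d_x(v)+\sum_j\lambda_jR_j(v)\ge\mu$ for every orbit $v$ and $\mu-\sum_j\lambda_jb_j>1/126$. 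Multiplying by $p_v$ and summing gives (21) exactly as in (19).

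The coupling is what creates the strictness. Since $\delta_c\le0$, Lemma~\ref{lem:drift} places $T_c$ in one of the negative or zero cells, so $a_x$ is not large relative to $a_c$. The $T_c$ gainers approve $x$ while having low $U$-utility. In $T_x$, the gainers must instead strictly prefer $T_x$ to $U+x$. Such gainers either disapprove $x$ or else need more $B_x$-approvals, and either way this pushes mass into $B_x$-approval at low $s$ and raises the average of $a$ over $B_x$ above $a_x$. I expect the main obstacle to be finding certificates for the motifs in which $B_x$ overlaps $B_c$, or contains $c$, together with $H_x\cap H_c\neq\varnothing$. In those motifs the two gainer sets can share voters, the separate bounds from (16) (cells $(2,2),(3,2),(2,3)$) do not combine, and the margin above $1/126$ must be strict.

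For these motifs I would first add the mixed two-swap rows $U-\{u,w\}+\{x,v\}$ with $u\in H_c$, $w\in H_x$, $v\in B_x$, because they are exactly the rows that tie the two gainer populations together. If strict margins still fail, I would split the motif further by whether $c\in A_i$ for the shared atoms. Each resulting certificate would be verified in exact \texttt{Fraction} arithmetic over the full orbit set.
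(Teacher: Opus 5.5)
Your plan is correct and is essentially the paper's proof. For each outsider motif and each Venn pattern of $(H_c,H_x)$ in $U$, you symmetrize a single joint profile carrying both blocks. You then certify the claim by an exact rational Farkas combination of the averaged one-swap PAV rows, the weak core rows for $U$, the source row $\delta_c\le0$ and the two block rows; your extra two-swap rows are harmless.

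The differences are only in presentation. The paper writes each certificate as infeasibility of the system with a common block scale $z$, adding the cap $\delta_x/n\le1/126$ when $r_c=3$; for $r_c=2$ the two blocks already cannot coexist with $\delta_c\le0$. This is LP-dual to your direct bound $\mu-\sum_j\lambda_jb_j>1/126$. Also, the paper's 19 motifs count only outsider-role patterns; the $1{,}008+427$ Venn patterns are enumerated separately rather than folded into the motif count.
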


\begin{proof}[Symmetry-complete exact Farkas proof]
Label the edge endpoints $c,x$.  If $r_c=2$, write
$B_c=\{x,a\}$.  The target response $B_x$ is determined, up to relabeling,
by $r_x\in\{2,3\}$, the bit $\1[c\in B_x]$, the bit
$\1[a\in B_x]$, and the number of remaining fresh candidates.  These are
exactly eight motifs.  For every motif, and for every ordered Venn pattern
of $H_c,H_x$ in $U$, the exact archive proves something stronger: the two
blocks cannot coexist at all when $\delta_c\le0$.

If $r_c=3$, write $B_c=\{x,a_1,a_2\}$.  Up to exchange of $a_1,a_2$, the
target is determined by
\[
 r_x,\quad \epsilon=\1[c\in B_x],\quad
 s=|B_x\cap\{a_1,a_2\}|,\quad
 f=r_x-\epsilon-s.                                        \tag{22}
\]
There are five admissible motifs for $r_x=2$ and six for $r_x=3$.  By
Lemma~\ref{lem:drift}, the source condition in (20) restricts
$h_c$ to $2$ or $3$.  The archive covers all 427 resulting Venn orbits and
proves (21).

Here is the certificate common to both cases.  The ordered pair
$(H_c,H_x)$ partitions $U$ into four Venn cells of sizes
$(\nu_{00},\nu_{10},\nu_{01},\nu_{11})$ summing to eight.  Average an
alleged profile under permutations inside these four cells.  In the
source-$r=3$ implementation we also average inside the outsider classes in
(22); its voter orbit records an approval count in each nonempty class and
the two bits for $c,x$.  The source-$r=2$ implementation retains a separate
bit for every outsider role, which is a finer (and therefore still valid)
type space.  Both targets, both strict-gainer predicates, and both drifts are
preserved by the respective reductions.

For each orbit, let $p\ge0$ be the vector of normalized voter masses and
introduce a common block scale $z\ge0$.  The retained exact inequalities are:
\begin{enumerate}
\item every cell-averaged one-swap PAV inequality $\le0$;
\item the two matching weak lower-core inequalities
      $q(U,T_j)/n\le |T_j|/8$;
\item the source row $\delta_c/n\le0$;
\item for the $r_c=3$ package, the contradictory cap
      $\delta_x/n\le1/126$; and
\item the two exact scaled blocking rows
\[
       -9q(W_j,T_j)/n+|T_j|z\le0,\qquad j\in\{c,x\}.       \tag{23}
\]
\end{enumerate}
Every genuine pair satisfying the contrary assumptions is feasible with
$z=1$.  Extra candidates can be deleted from ballots because none occurs in
a retained committee or target.

Write this system as $Ap+\alpha z\le b$, $\mathbf1^Tp=1$.
For every Venn orbit the stored certificate gives $\lambda\ge0$ and rational
$\rho$ with
\[
 \lambda^T\alpha\ge1,\qquad
 \lambda^TA_{\cdot v}+\rho\ge0\ \ (\text{every voter orbit }v),
 \qquad \beta=\lambda^Tb+\rho<1.                           \tag{24}
\]
Multiplication and normalization give $z\le\beta<1$, contradicting
$z=1$.

The source-$r=2$ verifier independently reconstructs all eight motifs and
$1,008$ Venn orbits.  In motif order
$(r_x,\1[c\in B_x],\1[a\in B_x])$, its eight largest exact $\beta$ values
are
\[
 \frac{54}{55},\frac{12}{13},\frac{18}{19},\frac{12}{13},
 \frac{45}{47},\frac{45}{47},\frac{18}{19},\frac9{10}.      \tag{25}
\]
The source-$r=3$ verifier reconstructs all eleven motifs, all $427$ Venn
orbits, and $570,400$ voter-orbit columns; its eleven largest exact $\beta$
values are
\[
 \frac{1464}{1505},\frac{61}{64},\frac{55}{58},\frac{18}{19},
 \frac{387}{442},\frac{143}{147},\frac{143}{147},\frac{12}{13},
 \frac{18}{19},\frac{675}{728},\frac{12}{13},              \tag{26}
\]
all strictly below one.  Together the two packages exhaust exactly the 19
motifs described above, including mutual edges, back edges, shared response
candidates, and wholly fresh responses.  A third, builder-independent checker
explicitly enumerates the response-set permutation action, verifies that
equal signatures are exactly equal orbits, and checks the $1,008+427$ Venn
cell count.  This proves the lemma.
\end{proof}

\section{Stationary-flow contradiction}

The last step has a form that is reusable beyond nine seats.

\begin{lemma}[portable stationary escape criterion]\label{lem:portable}
Let $X$ be finite, let $P$ be an irreducible stochastic matrix on $X$, and
let $a:X\to\mathbb R$.  Put $\delta=Pa-a$.  Suppose there are constants
$0<L<M$ such that $\delta_c\ge-L$ for every $c$, and every transition of
positive probability out of a state with $\delta_c\le0$ lands at a state
with $\delta_x\ge M$.  Then these assumptions are inconsistent.
\end{lemma}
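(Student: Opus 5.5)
The approach is to average the drift $\delta=Pa-a$ against the stationary distribution of $P$. Stationarity forces that average to vanish exactly. The transition hypothesis makes nonpositive-drift mass be matched by at least as much high-drift mass, and since $L<M$ this forces the average to be strictly positive.

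First, because $X$ is finite and $P$ is an irreducible stochastic matrix, Perron--Frobenius gives a unique stationary row vector $\pi$ with $\pi P=\pi$, $\sum_c\pi_c=1$, and $\pi_c>0$ for every $c\in X$. Then
\[
 \sum_c\pi_c\delta_c=\pi Pa-\pi a=\pi a-\pi a=0 .
\]
This is the telescoping identity.

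Second, partition $X$ into three sets:
\[
 N=\{c:\delta_c\le0\},\qquad S=\{c:\delta_c\ge M\},\qquad R=X\setminus(N\cup S).
\]
Since $M>0$, the sets $N$ and $S$ are disjoint, and every $c\in R$ has $0<\delta_c<M$. By hypothesis, every $c\in N$ has $P(c,x)=0$ for $x\notin S$, so $\sum_{x\in S}P(c,x)=1$. Stationarity then gives the flow inequality
\[
 \pi(S)=\sum_{x\in S}\sum_{c\in X}\pi_cP(c,x)\ \ge\ \sum_{c\in N}\pi_c\sum_{x\in S}P(c,x)=\pi(N).
\]

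Third, combine the pieces. Using $\delta_c\ge-L$ on $N$, $\delta_c>0$ on $R$, and $\delta_c\ge M$ on $S$,
\[
 0=\sum_c\pi_c\delta_c\ \ge\ -L\,\pi(N)+M\,\pi(S)\ \ge\ (M-L)\,\pi(N).
\]
We now split into two cases.
\begin{itemize}
\item If $N\ne\varnothing$, then $\pi(N)>0$ by positivity of $\pi$. Since $M>L$, the right-hand side is strictly positive, which is a contradiction.
\item If $N=\varnothing$, then every $\delta_c>0$ and every $\pi_c>0$, so $\sum_c\pi_c\delta_c>0$. This is again a contradiction.
\end{itemize}

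The only delicate point is ensuring strictness in the final inequality. It needs positivity of the stationary distribution, which is exactly where irreducibility enters; in the application this is supplied by Lemma~\ref{lem:strong}. The remaining steps are routine bookkeeping.
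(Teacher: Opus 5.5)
Your proof is correct and matches the paper's argument step for step: a strictly positive stationary distribution, the telescoping identity $\sum_c\pi_c\delta_c=0$, the cut inequality $\pi(S)\ge\pi(N)$ from the support hypothesis, and the bound $0\ge(M-L)\pi(N)$. The only cosmetic difference is that the paper first derives $\{c:\delta_c\le0\}\neq\varnothing$ from the zero-sum identity, whereas you treat $N=\varnothing$ as a separate case, which amounts to the same thing.
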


\begin{proof}
Let $\pi>0$ be stationary and put
$F=\{c:\delta_c\le0\}$ and $G=\{c:\delta_c\ge M\}$.  Since
$\sum_c\pi_c\delta_c=\pi Pa-\pi a=0$, the set $F$ is nonempty.  The support
assumption and stationarity give $\pi(G)\ge\pi(F)$.  All states outside
$F\cup G$ have positive drift, so
\[
 0=\sum_c\pi_c\delta_c\ge-L\pi(F)+M\pi(G)
   \ge(M-L)\pi(F)>0,
\]
a contradiction.
\end{proof}

Thus an extension argument at another seat number can reuse the global
closure step verbatim: it only needs a one-blocker loss bound $L$ and a
simultaneous edge certificate with destination gain $M>L$.

\begin{proof}[Proof of Theorem~\ref{thm:main}]
Assume for contradiction that every extension $W_c$, $c\in O$, is blocked.
Choose an inclusion-minimal closed system $X$ and one internal blocker
$T_c$ at each state.  By Corollary~\ref{cor:r2}, $r_c\ge2$ for all $c$.
By Lemma~\ref{lem:strong}, the selected response digraph is strongly
connected.

Define the stochastic matrix
\[
 P(c,x)=\begin{cases}1/r_c,&x\in B_c,\\0,&x\notin B_c.\end{cases}            \tag{27}
\]
It is irreducible, hence has a stationary distribution $\pi$ with
$\pi_c>0$ for every $c\in X$.  Stationarity and (13) telescope exactly:
\[
 \sum_{c\in X}\pi_c\delta_c
 =\sum_c\pi_c\left(\sum_xP(c,x)a_x-a_c\right)=0.           \tag{28}
\]

Put
\[
 F=\{c:\delta_c\le0\},\qquad
 G=\{c:\delta_c\ge n/126\}.                               \tag{29}
\]
The set $F$ is nonempty, or every term in (28) is strictly positive.
Lemma~\ref{lem:drift} gives $\delta_c\ge-n/189$ on $F$ and says that every
$c\in F$ has $h_c>0$ and $r_c\in\{2,3\}$.

Fix $c\in F$ and $x\in B_c$.  If $h_x=0$ or $r_x\ge4$,
Lemma~\ref{lem:drift} puts $x\in G$.  Otherwise
$h_x>0$ and $r_x\in\{2,3\}$, so Lemma~\ref{lem:destination} again puts
$x\in G$.  Therefore
\[
                  P(c,G)=1\quad(c\in F).                   \tag{30}
\]
Stationarity gives the cut inequality
\[
 \pi(G)=\sum_c\pi_cP(c,G)\ge\sum_{c\in F}\pi_c=\pi(F).    \tag{31}
\]
States outside $F\cup G$ have positive drift.  Combining
(14), (28), and (31) now gives
\[
 0=\sum_c\pi_c\delta_c
 \ge-\frac n{189}\pi(F)+\frac n{126}\pi(G)
 \ge\frac n{378}\pi(F)>0,                                 \tag{32}
\]
the desired contradiction.  Hence some $W_c$ is a nine-seat core.
\end{proof}

Notice that the proof uses every target size from one through nine, permits
targets to overlap the committee, and treats quota equality as blocking.
No fractional committee, randomized outcome, Droop quota, or ceiling
convention occurs.

\section{Computer-assisted proof boundary and validation}

The mathematical trusted base consists of Theorem~\ref{thm:k8}, the symbolic
portions of Sections 2--8, and the finite exact verification used in
Lemmas~\ref{lem:absorb}, \ref{lem:drift}, and \ref{lem:destination}.  It also
includes Python's arbitrary-precision integers and \texttt{fractions.Fraction},
three frozen certificate data sets, and five row-reconstructing or coverage
verifiers.  SciPy, NumPy, HiGHS, and all
discovery or certificate-building scripts are outside the trusted base.  The
included exact verifiers use no floating-point comparisons.  They abort if
Python disables assertions.  This package validates the three new frozen
certificate archives.  It does not regenerate them, include solver logs, or
replay the external certificate underlying Theorem~\ref{thm:k8}.

Run these commands from the package root:
\begin{verbatim}
python -B verification/verify_motif_coverage.py
python -B verification/verify_tied_absorption.py
python -B verification/verify_general_blocker_drift.py
python -B verification/verify_r2_edges.py
python -B verification/verify_r3_edges.py
\end{verbatim}
The script \texttt{reproduce.ps1} runs these checks, performs a two-pass
compilation in a separate directory, and verifies the release manifest.  The
build directory is excluded from the manifest.  The script does not assert
byte identity between the rebuilt PDF and the released \texttt{paper/main.pdf}.
The expected leading markers are
\begin{verbatim}
DESTINATION_MOTIF_COVERAGE_EXACT_PASS
TIED_CANDIDATE_ABSORPTION_EXACT_PASS
GENERAL_BLOCKER_DRIFT_FARKAS_EXACT_PASS
DIRECTED_R2_DRIFT_FARKAS_EXACT_PASS
LOW_EDGE_R3_DESTINATION_DRIFT_FARKAS_EXACT_PASS
\end{verbatim}
The package was assembled with Python 3.10.16.  A SHA-256 manifest freezes
every stable release file.  Each verifier reconstructs the relevant voter
orbits, strict-gainer predicates, PAV inequalities, and Farkas signs from the
mathematical definitions before checking the stored rational multipliers.  A
successful run therefore establishes exact certificate validation and source
compilability, not end-to-end certificate generation.

\subsection{Two worked certificate records}

For a one-blocker example, take the archived cell $(h,r)=(0,2)$ and write
$B_c=\{x_1,x_2\}$.  Averaging under permutations of $U$ and $B_c$ reduces a
ballot to $(z,e,b)$, where $z=|A_i\cap U|$, $e=|A_i\cap B_c|$, and
$b=\1[c\in A_i]$.  These ranges give $9\cdot3\cdot2=54$ exact voter orbits.
The strict-gainer indicator and drift column are
\[
                 g=\1[e>z+b],\qquad
                 d=\frac{e/2-b}{z+1}.
\]
In zero-based verifier order, row 0 sums the $16$ one-swap PAV inequalities,
row 5 averages the competitor obtained by removing two members of $U$ and
adding $c$ and one member of $B_c$, and row 7 is $-g\le-2/9$.
The JSON record stores
\[
 (\lambda_0,\lambda_5,\lambda_7)
   =\left(\frac1{144},\frac79,\frac{17}{16}\right),
 \qquad \mu=-\frac29.
\]
The verifier checks $d+\lambda_0R_0+\lambda_5R_5+\lambda_7R_7\ge\mu$ on
all 54 orbits.  Summing against voter masses gives
\[
 \frac{\delta_c}{n}\ge
 -\frac29-\frac{17}{16}\left(-\frac29\right)=\frac1{72}.
\]
For example, orbit $(z,e,b)=(0,1,0)$ has
$(d,R_0,R_5,R_7)=(1/2,8,1/2,-1)$.  Its certified left side is
$-17/144\ge-2/9$.

For a two-blocker example, take the archived motif
\[
                     \texttt{r2\_r2\_m1\_s1\_new0\_p3}
\]
with Venn counts
$(\nu_{00},\nu_{10},\nu_{01},\nu_{11})=(1,0,0,7)$.
Thus $H_c=H_x=H$ is a seven-set,
$U\setminus H=\{u_0\}$, $B_c=\{x,a\}$, and $B_x=\{c,a\}$.
A ballot orbit is $(p,q;b_c,b_x,b_a)$, recording approval of $u_0$, the
number approved in $H$, and the three outsider bits.  There are
$2\cdot8\cdot2^3=128$ such orbits.  Verifier row 4 is the averaged PAV row
for $U-u_0+a$, and rows 9 and 10 are the two scaled blocking rows.
The stored certificate has
\[
 \lambda_4=\frac89,\qquad \lambda_9=\frac8{81},\qquad
 \lambda_{10}=\frac1{81},\qquad \rho=\frac89.
\]
For every orbit the verifier checks
\[
 \rho+\lambda_4R_4+\lambda_9R_9+\lambda_{10}R_{10}\ge0.
\]
The common-scale coefficient is
$9(8/81+1/81)=1$, while the resulting bound is $\beta=8/9<1$.
Hence a genuine pair, which has scale one, cannot realize this motif.
For the ballot approving only $a$, the row values are $(1,-9,-9)$ and the
displayed left side is $7/9\ge0$.  These examples use the same zero-based row
indices and rational multipliers as the released JSON files.

\section{Discussion and conclusion}

Theorem~\ref{thm:main} establishes deterministic Hare-core nonemptiness at
$k=9$.  Together with the previously known $k\le8$ theorem, it yields
nonemptiness for every $k\le9$.  The argument explains why local blocker
escape is sufficient only after all responses are coupled: the stationary
identity uses the same voter profile and the same PAV marginal at every state.
Together with the known candidate and voter-type bounds, the result confines
any unresolved instance to $k\ge10$, at least 16 candidates, and at least
eight distinct approval types.

The portable part of the proof is Lemma~\ref{lem:portable}.  At another seat
number, the same closure argument would apply if one could certify a uniform
one-blocker loss $L$ and a simultaneous successor gain $M>L$.  The constants
and equality analysis established here are specific to the eight-to-nine
extension, so they do not by themselves provide such certificates for larger
committees.

Accordingly, this paper neither proves nonemptiness for arbitrary $k$ nor
supplies an empty-core election.  It also makes no polynomial-time claim for
finding the certified anchor or its stable extension.  Its exact conclusion
is that every finite nine-seat approval election has a deterministic
Hare-quota core committee.

\section*{AI use disclosure}

OpenAI Codex, using GPT-5.6 Sol, was used for proof exploration,
exact-verifier development, literature-query formulation, manuscript
drafting, and release packaging.  AI-generated output is not treated as
mathematical evidence.  The results are supported by the symbolic proof,
the frozen exact rational certificates, and the independently replayable
verifiers described above.  The author reviewed the statements, proofs,
citations, source code, and final artifacts and remains responsible for
the content.

\section*{Code availability}

The verifier source code, exact rational certificates, manuscript source,
and reproduction instructions are available at
\url{https://github.com/Baymax-ray/approval-core-nine-seats}.

\small
\raggedright

\end{document}